\begin{document}

\setlist{noitemsep}  

\title{Incremental Sharpe and other performance ratios}

\author{Eric Benhamou 
\thanks{LAMSADE, Université Paris Dauphine, Place du Maréchal de Lattre de Tassigny,75016 Paris, France} 
\textsuperscript{,} 
\thanks{A.I. SQUARE CONNECT, 35 Boulevard d'Inkermann 92200 Neuilly sur Seine, France}  
\textsuperscript{,} 
\thanks{E-mail: eric.benhamou@dauphine.eu, eric.benhamou@aisquareconnect.com} \hspace{0.2cm}, 
Beatrice Guez 
\textsuperscript{\dag}  
\textsuperscript{,}  
\thanks{E-mail: beatrice.guez@aisquareconnect.com}}

\date{}              


\singlespacing

\maketitle

\vspace{-.2in}
\begin{abstract}
\noindent We present a new methodology of computing incremental contribution for performance ratios for portfolio like Sharpe, Treynor, Calmar or Sterling ratios. Using Euler's homogeneous function theorem, we are able to decompose these performance ratios as a linear combination of individual modified performance ratios. This allows understanding the drivers of these performance ratios as well as deriving a condition for a new asset to provide incremental performance for the portfolio. We provide various numerical examples of this performance ratio decomposition.
\end{abstract}

\medskip

\noindent \textit{JEL classification}: C12, G11.

\medskip
\noindent \textit{Keywords}: Sharpe, Treynor, recovery, incremental Sharpe ratio, portfolio diversification

\thispagestyle{empty}

\clearpage

\onehalfspacing
\setcounter{footnote}{0}
\renewcommand{\thefootnote}{\arabic{footnote}}
\setcounter{page}{1}

\section{Introduction}

When facing choices to invest in various funds (whether mutual or hedge funds), it is quite common to compare their Sharpe ratio, or other performance ratios like Treynor or recovery ratio in order to rank funds. These ratios aim at measuring performance for a given risk. They achieve two important things: they measure performance taking into account risk. They allow constructing the optimal performance as the result of an optimization program.

The usual performance metric is the eponymous Sharpe ratio established in \cite{Sharpe_1966}. It is a simple number easy to derive and intuitive to understand as it computes the ratio of the excess return over the strategy standard deviation.  It has various limitations that have been widely emphasized by various authors (\cite{PilotteSterbenz_2006},  \cite{Sharpe_1998}, \cite{Nielsen_Vassalou_2004}) leading to other performance ratios like Treynor ratio (see \cite{Treynor_FBlack_1973}), but also Calmar (see \cite{Young_1991}), Sterling (see \cite{McCafferty_2003}) or Burke ratio (see \cite{Burke_1994}). Other authors have also tried to provide additional constraints to the Sharpe as in \cite{Bertrand_2008} or more recently in \cite{Darolles_2012} or to use option implied volatility and skewness as in \cite{DeMiguel_Plyakha_Uppal_Vilkov_2013}. There have been also numerous empirical work on Sharpe ratio as for the most recent ones  in \cite{Giannotti_Mattarocci_2013}, \cite{AndersonBianchi_2014}. There has been also an interesting approach by \cite{Challet_2017} to compute Sharpe ratio through total drawdown duration.

An important feature that has been noted in \cite{Darolles_2012} or \cite{Steiner_2011} but only for the Sharpe ratio is the fact that most of the performance ratios are so called 0 Euler homogeneous with respect to the portfolio weights. In financial terms, there are not sensitive to the leverage of the portfolio. 

The contribution of our paper is to exploit this mathematical property and re-derive well known results on the Sharpe ratio in a new manner. As a consequence, we obtain the condition for a new asset to increase the overall Sharpe of a portfolio. We also extend the incremental performance marginal sensitivity to all performance ratios that are 0 Euler homogeneous with respect to the portfolio weights. This allows in particular to understand the performance ratios drivers. We finally show how to decompose performance ratios between a benchmark and the individual portfolio constituents.

\section{Euler homogeneous functions and its application to performance ratios} \label{sec:EulerHomogeneousFunctions}

\subsection{Euler's theorem}

In mathematics, one call a homogeneous function one that has a multiplicative scaling behaviour. If we multiply all its arguments by a constant factor, then its value is multiplied by some power of this factor. If we denote by $f \colon \mathbb{R}^n \rightarrow \mathbb{R}$  a multidimensional function from $\mathbb{R}^n$ to $\mathbb{R}$, then the function $f$ is said to be homogeneous of degree $k$ if the following holds:

\begin{equation}
f( \alpha v ) = \alpha^k f(v)
\end{equation}

for all positive $\alpha > 0$ and $v \in \mathbb{R}^n$. If the function is continuously differentiable ( and this generalized also to almost surely continuously differentiable function), the Euler's homogeneous function theorem \footnote{This theorem is trivially proved by differentiating $f( \alpha v ) = \alpha^k f(v)$ with respect to $\alpha$ for the implication condition and by integrating the differential equation $x \cdot \nabla f(x)= k f(x)$ for the reverse condition} states that the function is homogeneous if and only if 

\begin{equation} \label{Euler}
x \cdot \nabla f(x)= k f(x)
\end{equation}

where $ \nabla f(x)$ stands for the gradient of $f$. This theorem (shown in various book like for instance \cite{Lewis_1969}) gives in particular a nice decomposition of any homogeneous function provided we can compute the gradient function as it says that the function is a linear combination of partial derivatives as follows:
\[
f(x) = \frac 1 k \sum_{i=1..n} x_i \frac{ \partial f }{ \partial x_i}
\]

\subsection{Intuition with Sharpe ratio}
Let us see how this can be applied to any homogeneous performance ratio. In order to build our intuition, we will start by the Sharpe ratio as this is a simple and well know ratio. We assume we have a portfolio of $n$ assets with weights $w_i$. We denote by $R_f$ the risk free rate and $R_p$ the portfolio return. The Sharpe ratio is defined as the fraction of the portfolio excess return $r_p$ over the portfolio volatility $\sigma_p$ and given by
\begin{equation}
S_p = \frac{R_p - R_f}{\sigma_p} = \frac{r_p}{\sigma_p} 
\end{equation}

If we decompose the portfolio excess return $r_p$ as the convex combination of its assets excess returns with percentage weights $ w_i$, we get that the Sharpe ratio is a convex combination of the modified Sharpe
\begin{equation}
S_p =  \frac{ \sum_{i=1}^{n} w_i r_i}{\sigma_p} = \sum_{i=1}^{n}  w_i   \frac{r_i}{\sigma_p}
\end{equation}

This says that if we were looking at a modified Sharpe ratio of each portfolio constituent where the volatility of the constituent is modified into the one of the portfolio, then the Sharpe ratio of the portfolio is simply the convex combination of these modified Sharpe ratio. This is nice from a theoretical point of view but not very useful as this forces us to compute the volatility of the portfolio and does not give any hindsight about asset $i$ volatility. This is where Euler homogeneous formula comes at the rescue. The Sharpe ratio like many other performance ratio has the particularity that it is the fraction of two homogeneous function of degree 1. The decomposition for the excess return in terms of a linear combination of the portfolio weight is obvious. 

More subtle is the fact that the volatility of the portfolio can also be decomposed as the convex sum of individual volatility contributions. Indeed, the volatility is an homogeneous function of degree 1 of the portfolio weights as scaling the weights by a factor increases the portfolio by the same factor. In the sequel, we will denote for asset $i$, $\rho_{i,p}$ its correlation with portfolio $P$, $\sigma_i$ its volatility and $S_i$ its Sharpe ratio. Thanks to Euler's homogeneous function theorem, we know that the portfolio volatility can be written as follows

\begin{proposition} \label{prop:port_vol}
The weighted marginal contributions to volatility sum up to portfolio volatility as follows:
\begin{eqnarray}
\sigma_p    =   \sum_{i=1}^n w_i \frac{ \partial \sigma_p} { \partial w_i }  = \sum_{i=1}^n w_i \rho_{i,p}  \sigma_i
\end{eqnarray}
\end{proposition}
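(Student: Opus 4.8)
The plan is to apply Euler's homogeneous function theorem (equation~\eqref{Euler} with $k=1$) to the portfolio volatility viewed as a function of the weight vector $w = (w_1,\dots,w_n)$, and then to identify the partial derivatives explicitly. First I would write the portfolio variance as the quadratic form $\sigma_p^2 = w^\top \Sigma w = \sum_{i,j} w_i w_j \sigma_i \sigma_j \rho_{i,j}$, where $\Sigma$ is the covariance matrix, so that $\sigma_p = (w^\top \Sigma w)^{1/2}$. Scaling $w \mapsto \alpha w$ for $\alpha>0$ multiplies the quadratic form by $\alpha^2$ and hence $\sigma_p$ by $\alpha$, so $\sigma_p$ is homogeneous of degree $1$; Euler's theorem then gives immediately $\sigma_p = \sum_{i=1}^n w_i \,\partial \sigma_p/\partial w_i$, which is the first equality.

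For the second equality I would compute the marginal contribution $\partial \sigma_p/\partial w_i$ by the chain rule: $\partial \sigma_p/\partial w_i = \frac{1}{2\sigma_p}\,\partial(w^\top \Sigma w)/\partial w_i = \frac{(\Sigma w)_i}{\sigma_p}$. The numerator $(\Sigma w)_i = \sum_j w_j \sigma_i\sigma_j\rho_{i,j} = \sigma_i\,\mathrm{Cov}(R_i,R_p)/\sigma_i \cdot$ — more precisely, $(\Sigma w)_i = \mathrm{Cov}(R_i, R_p)$ since $R_p = \sum_j w_j R_j$. Therefore $\partial \sigma_p/\partial w_i = \mathrm{Cov}(R_i,R_p)/\sigma_p = \rho_{i,p}\,\sigma_i\sigma_p/\sigma_p = \rho_{i,p}\,\sigma_i$, using the definition $\rho_{i,p} = \mathrm{Cov}(R_i,R_p)/(\sigma_i\sigma_p)$. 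Substituting this into the Euler decomposition yields $\sigma_p = \sum_{i=1}^n w_i \rho_{i,p}\sigma_i$, as claimed.

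The only delicate point is the differentiability of $\sigma_p$: the square-root function is not differentiable at $0$, so one needs $\sigma_p > 0$ (equivalently $\Sigma$ positive definite on the relevant domain, or at least $w^\top\Sigma w > 0$), which is the natural non-degeneracy assumption for any portfolio carrying risk — this matches the parenthetical remark in the excerpt about ``almost surely continuously differentiable'' functions. Away from that degenerate set everything is smooth and the computation above is routine. I would therefore state the positivity of $\sigma_p$ as a standing hypothesis and present the two displayed equalities as (i) an instance of Euler's theorem and (ii) the explicit chain-rule evaluation of the gradient, with the covariance identity $(\Sigma w)_i = \mathrm{Cov}(R_i,R_p)$ doing the real work in the second step.
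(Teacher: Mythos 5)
Your proposal is correct and follows essentially the same route as the paper: differentiate the variance quadratic form to obtain $\partial \sigma_p/\partial w_i = \mathrm{Cov}(R_i,R_p)/\sigma_p = \rho_{i,p}\sigma_i$, then invoke Euler's theorem for degree-1 homogeneity. The paper writes the sums out componentwise rather than in matrix form, but the argument is identical; your explicit remark on the non-degeneracy condition $\sigma_p>0$ is a minor addition the paper leaves implicit.
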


\begin{proof} trivial consequence of Euler's homogeneous function theorem and given in \ref{proof:0}.
\end{proof} 

This results was first derived in \cite{Bertrand_2009} and in \cite{Maillard_2010} in the case of equally weighted portfolio and was also noted in \cite{Darolles_2012}. Euler property has been mentioned as early as in \cite{Litterman_1996} and \cite{Tasche_1999}. Using this first property, it is now easy to derive a convex combination for the portfolio Sharpe ratio as follows.

\begin{proposition} \label{prop:1}
The portfolio Sharpe ratio is a convex combination of individual Sharpe ratios weighted by the inverse of the asset $i$ correlation with portfolio $P$, $\rho_{i,p}$:

\begin{equation} \label{eq:Sharpe}
S_p = \sum_{i=1}^{n}  \theta_i \frac{1}{\rho_{i,p}} S_i
\end{equation}

The risk weights $(\theta_i)_{i=1..n}$ sums to one and are given by 
\begin{equation} 
\theta_i = \frac{w_i \rho_{i,p} \sigma_i}{\sigma_p}
\end{equation}

The coefficient $1 / \rho_{i,p}$ measures the diversification effect. It increases Sharpe ratio for low correlation. 
\end{proposition}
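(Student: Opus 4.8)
The plan is to verify the identity \eqref{eq:Sharpe} by a short direct computation, using Proposition~\ref{prop:port_vol} in two places: once implicitly through the structure of the weights $\theta_i$, and once explicitly to show they sum to one. First I would start from the definition $S_p = r_p/\sigma_p$ and expand the portfolio excess return as the linear combination $r_p = \sum_{i=1}^n w_i r_i$, which is immediate because excess return is linear (degree-one homogeneous) in the weights. This already gives $S_p = \sum_{i=1}^n w_i r_i/\sigma_p$.

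Next I would rewrite each summand by inserting the factors $\sigma_i/\sigma_i$ and $\rho_{i,p}/\rho_{i,p}$, namely
\[
\frac{w_i r_i}{\sigma_p} \;=\; \frac{w_i \rho_{i,p}\sigma_i}{\sigma_p}\cdot\frac{1}{\rho_{i,p}}\cdot\frac{r_i}{\sigma_i} \;=\; \theta_i\,\frac{1}{\rho_{i,p}}\,S_i ,
\]
using the definitions $S_i = r_i/\sigma_i$ and $\theta_i = w_i\rho_{i,p}\sigma_i/\sigma_p$. Summing over $i$ then yields \eqref{eq:Sharpe} directly. For the normalization of the risk weights I would simply sum them: $\sum_{i=1}^n \theta_i = \frac{1}{\sigma_p}\sum_{i=1}^n w_i\rho_{i,p}\sigma_i = \sigma_p/\sigma_p = 1$, where the middle equality is exactly the content of Proposition~\ref{prop:port_vol} (itself a consequence of Euler's homogeneous function theorem applied to $\sigma_p$).

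There is no substantive obstacle here; the result is essentially a corollary of Euler's theorem via Proposition~\ref{prop:port_vol}. The one point I would be careful about is the word \emph{convex}: the $\theta_i$ always sum to one, but they need not be individually nonnegative, since $\theta_i \ge 0$ requires $w_i\rho_{i,p}\ge 0$ for each $i$ — true, for instance, for a long-only portfolio whose constituents are nonnegatively correlated with the aggregate, but not in general. I would flag this, together with the standing assumptions $\sigma_i,\sigma_p>0$ and $\rho_{i,p}\ne 0$ needed to make the divisions and the factor $1/\rho_{i,p}$ well defined, and then state that the interpretation of $1/\rho_{i,p}$ as a diversification multiplier follows since it exceeds one precisely when $\rho_{i,p}<1$.
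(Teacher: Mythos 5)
Your proof is correct and follows essentially the same route as the paper's: expand $r_p=\sum_i w_i r_i$, multiply and divide each term by $\rho_{i,p}\sigma_i$, and identify $\theta_i$ and $S_i$. You go slightly further than the paper by explicitly checking $\sum_i\theta_i=1$ via Proposition~\ref{prop:port_vol} and by flagging that the $\theta_i$ need not be nonnegative in general, both of which are sensible additions.
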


\begin{proof} given in \ref{proof:1}.
\end{proof} 

As a byproduct, we get the condition for a new asset to improve the overall portfolio Sharpe summarized below

\begin{proposition} \label{prop:2}
It is optimal to include an asset $i$ in a portfolio if and only if
\begin{equation} \label{eq:Sharpe}
S_i \geq \rho_{i,p} S_p
\end{equation}
\end{proposition}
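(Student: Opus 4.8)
The plan is to read ``it is optimal to include asset $i$'' as the marginal statement that putting an infinitesimal positive weight on asset $i$ does not decrease the portfolio Sharpe ratio. Concretely, I would introduce the one–parameter family of portfolios $P_\epsilon$ that allocates weight $\epsilon$ to the candidate asset and weight $1-\epsilon$ to the existing portfolio $P$, and determine the sign of $\frac{d}{d\epsilon}S_{P_\epsilon}\big|_{\epsilon=0}$. (Equivalently, one can differentiate $S_p=r_p/\sigma_p$ directly with respect to $w_i$; both routes give the same answer.)

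First I would write the excess return and variance of $P_\epsilon$ as $r_{P_\epsilon}=(1-\epsilon)r_p+\epsilon r_i$ and $\sigma_{P_\epsilon}^2=(1-\epsilon)^2\sigma_p^2+2\epsilon(1-\epsilon)\rho_{i,p}\sigma_i\sigma_p+\epsilon^2\sigma_i^2$, using that the covariance between asset $i$ and portfolio $P$ equals $\rho_{i,p}\sigma_i\sigma_p$. Differentiating at $\epsilon=0$ and using $\sigma_{P_0}=\sigma_p$ gives $\frac{d r_{P_\epsilon}}{d\epsilon}\big|_0=r_i-r_p$ and $\frac{d\sigma_{P_\epsilon}}{d\epsilon}\big|_0=\rho_{i,p}\sigma_i-\sigma_p$; the second expression is precisely the marginal contribution to volatility of Proposition~\ref{prop:port_vol} (namely $\partial\sigma_p/\partial w_i=\rho_{i,p}\sigma_i$) shifted by $-\sigma_p$ because we are moving along the simplex.

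Then the quotient rule yields $\frac{d}{d\epsilon}S_{P_\epsilon}\big|_0=\dfrac{(r_i-r_p)\sigma_p-r_p(\rho_{i,p}\sigma_i-\sigma_p)}{\sigma_p^2}=\dfrac{r_i\sigma_p-r_p\rho_{i,p}\sigma_i}{\sigma_p^2}$, the $r_p\sigma_p$ terms cancelling. Since $\sigma_p^2>0$, this derivative is nonnegative exactly when $r_i\sigma_p\ge r_p\rho_{i,p}\sigma_i$; dividing by $\sigma_i\sigma_p>0$ and recognising $r_i/\sigma_i=S_i$ and $r_p/\sigma_p=S_p$ rewrites the condition as $S_i\ge\rho_{i,p}S_p$, which is the claim, with the strict inequality corresponding to a strict improvement.

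I do not anticipate a real technical obstacle: once Proposition~\ref{prop:port_vol} is available the computation is a few lines. The only point needing care is conceptual — stating precisely that ``optimal to include'' is a first‑order/marginal condition on adding a small position rather than a claim about arbitrary reallocations, and matching the weak versus strict inequality to weak versus strict improvement. I would also note, as a sanity check, that the marginal contribution vanishes for every asset simultaneously exactly at the Sharpe‑optimal (tangency) portfolio, which reconciles this proposition with the decomposition in Proposition~\ref{prop:1}.
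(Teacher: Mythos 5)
Your computation is correct and the route is genuinely different from the paper's. You prove a \emph{local} first-order condition: you differentiate the Sharpe ratio of the blended portfolio $P_\epsilon=(1-\epsilon)P+\epsilon\,i$ at $\epsilon=0$ and obtain $\frac{d}{d\epsilon}S_{P_\epsilon}\big|_0=(r_i\sigma_p-r_p\rho_{i,p}\sigma_i)/\sigma_p^2$, whose sign is exactly the stated condition; the algebra checks out, and the link to $\partial\sigma_p/\partial w_i=\rho_{i,p}\sigma_i$ from Proposition~\ref{prop:port_vol} is the right one. The paper instead argues \emph{globally}: it uses the convex decomposition of Proposition~\ref{prop:1}, factors the Sharpe ratio as $(1-\theta_n)S_{\widetilde{P}}+\theta_n\frac{1}{\rho_{n,P}}S_n$ where $\widetilde{P}$ is the Sharpe-optimal portfolio on the remaining $n-1$ assets, and concludes that the optimal risk weight $\theta_n$ is nonzero iff the slope $\frac{1}{\rho_{n,P}}S_n-S_{\widetilde{P}}$ is nonnegative. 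The trade-off is real: your argument is more elementary and fully rigorous as a marginal statement, but (as you yourself note) it only certifies that an infinitesimal allocation helps, not that the globally optimal weight is positive — the converse direction (``derivative nonpositive at zero implies optimal weight is zero'') needs quasi-concavity of $\epsilon\mapsto S_{P_\epsilon}$, which you should at least assert. The paper's argument aims directly at the global claim, though it leans on treating the objective as linear in $\theta_n$ while $\rho_{n,P}$ and the risk weights themselves depend on the allocation, so it is not obviously more airtight. Your closing sanity check — that the marginal condition binds with equality for all assets at the tangency portfolio — is a good reconciliation of the two viewpoints and worth keeping.
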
 

\begin{proof} See \ref{proof:2}.
\end{proof} 

This result is complementary from the standard mean variance approach as presented in \cite{Lobo_2007} and generalized in \cite{Shingo_2015}, which investigates about the optimal weights in a mean variance framework and states that the optimal weights are the result of a normal equation.

\subsection{General case}
A large number of performance ratios like Sharpe, Treynor, Sortino, Calmar, Sterling, information ratios or M2 write as the fraction of an excess return or a return or a return over a benchmark over some risk measure. The numerator and the denominator are homogeneous functions of degree 1. This leads to a performance ratio that is an homogeneous function of degree 0. In financial terms, the performance ratio is insensitive to leverage. 

We will write therefore any portfolio leverage-insensitive ratio $PR(p)$ as the fraction of an portfolio homogeneous return $R_p$ over an homogeneous function of degree 1, $f(p)$ as follows:
\begin{equation}
PR(p) = \frac{ R_p} {f(p)}
\end{equation}
We will also denote for the asset $i$ by $f(i)$ its denominator function 
and $PR(i)$ its coresponding performance ratio. 
Since the general return is an homogeneous functions of degree 1, 
it can be written as the convex combination of individual asset general returns:
\begin{equation}\label{eq:general1}
R_p = \sum_{i=1}^{n} w_i R_i
\end{equation}

Since the denominator is an homogeneous functions of degree 1, it can be written as a convex combination of individual asset contribution thanks to the Euler's homogeneous function theorem:

\begin{equation}\label{eq:general2}
f(p) = \sum_{i=1}^{n} w_i \frac{ \partial f}{\partial w_i}
\end{equation}

Combining equations \ref{eq:general1} and \ref{eq:general2} leads to a decomposition of the leverage-insensitive ratio $PR$ into individual or incremental performance ratio for each asset $i$

\begin{equation}\label{eq:general3}
PR(p) =  \sum_{i=1}^{n} \frac{  w_i    \frac{  \partial f}{\partial w_i} } { f(p) } \times \frac{ f(i)} {\frac{  \partial f}{\partial w_i}}  \times \frac{  R_i  } { f(i)} 
= \sum_{i=1}^{n} \theta_i \times D_i \times PR(i)
\end{equation}

The risk factor $\theta_i$ and the diversification factor $ D_i $ are given respectively by
\begin{equation}
\theta_i =  \frac{  w_i    \frac{  \partial f}{\partial w_i}}{ f(p)}, \quad  D_i =  \frac{ f(i)} {\frac{  \partial f}{\partial w_i}}  
\end{equation}

As in the case of the Sharpe, it is then easy to derive a condition for a new asset to improve the overall portfolio performance ratio summarized below
\begin{proposition} \label{prop:GeneralIncremental}
It is optimal to include an asset $i$ in a portfolio in order to maximize the performance ratio $PR(p)$ if and only if
\begin{equation} \label{eq:Sharpe}
PR(i)  \geq \frac{PR(p)}{D_i}   \Leftrightarrow  PR(i)  \geq \frac{\frac{  \partial f}{\partial w_i}}  { f(i)}   \times  PR(p)
\end{equation}
\end{proposition}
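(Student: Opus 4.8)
The plan is to read ``it is optimal to include an asset $i$'' as the statement that an infinitesimal increase in the weight of asset $i$ does not lower the portfolio performance ratio, i.e.\ that $\frac{\partial PR(p)}{\partial w_i}\ge 0$ evaluated at the current weights. Since the numerator and denominator of $PR$ are both homogeneous of degree $1$, the ratio $PR$ is homogeneous of degree $0$, so Euler's theorem applied to $PR$ itself gives $\sum_{j} w_j \frac{\partial PR(p)}{\partial w_j} = 0$; consequently this marginal criterion does not depend on how the new allocation is financed --- leverage or a proportional trim of the existing holdings give the same sign --- and it is enough to differentiate $PR(p)=\frac{R_p}{f(p)}$ directly with respect to $w_i$, the other weights held fixed.

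First I would perform the differentiation. From \ref{eq:general1} we have $\frac{\partial R_p}{\partial w_i}=R_i$, so the quotient rule yields
\begin{equation}
\frac{\partial PR(p)}{\partial w_i} \;=\; \frac{R_i\, f(p) - R_p\, \frac{\partial f}{\partial w_i}}{f(p)^2}.
\end{equation}
Since $f(p)^2>0$, the sign of the derivative is the sign of its numerator, so $\frac{\partial PR(p)}{\partial w_i}\ge 0$ is \emph{equivalent} to the inequality $R_i\, f(p)\ge R_p\,\frac{\partial f}{\partial w_i}$.

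Next I would rewrite this in the announced form by dividing both sides by $f(i)\,f(p)$, which is legitimate and sign-preserving because $f(i)$ and $f(p)$ are positive risk measures. This gives $\frac{R_i}{f(i)}\ge \frac{R_p}{f(p)}\cdot\frac{\partial f/\partial w_i}{f(i)}$, i.e.\ $PR(i)\ge \frac{\partial f/\partial w_i}{f(i)}\,PR(p)$, which is the right-hand form of the proposition. By the definition $D_i=\frac{f(i)}{\partial f/\partial w_i}$ one has $\frac{\partial f/\partial w_i}{f(i)}=\frac{1}{D_i}$, and substituting produces the left-hand form $PR(i)\ge \frac{PR(p)}{D_i}$; the ``$\Leftrightarrow$'' in the statement is simply this algebraic rewriting.

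The part that needs the most care is not the computation but the modelling convention hidden in ``optimal to include'' --- the first-order/marginal reading used above --- together with the implicit positivity $f(p)>0$, $f(i)>0$ needed to keep the divisions sign-preserving. One should also flag the case $\frac{\partial f}{\partial w_i}<0$ (for the volatility denominator this is a negatively correlated asset, where $D_i=1/\rho_{i,p}<0$): then $PR(i)\ge PR(p)/D_i$ holds automatically as soon as $PR(i)\ge 0$, matching the intuition that such an asset always improves the ratio. Everything else is the routine quotient-rule step above, mirroring the Sharpe-ratio argument behind Proposition \ref{prop:2}.
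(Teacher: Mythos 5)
Your proof is correct, but it takes a genuinely different route from the paper's. The paper reuses its decomposition $PR(p)=\sum_i \theta_i D_i \, PR(i)$ and mimics the Sharpe-ratio argument of Proposition \ref{prop:2}: it separates the new asset's risk weight $\theta_n$ from a renormalized $(n-1)$-asset sub-portfolio, argues that the inner optimization yields the optimal sub-portfolio $\widetilde{P}$, and reads the condition off the sign of the slope of the resulting (purportedly linear) program in $\theta_n$, obtaining $PR(n)\ge PR(\widetilde{P})/D_n$ --- note that the benchmark there is the optimal portfolio \emph{without} the asset, not the current portfolio $p$ that appears in the statement. You instead differentiate $PR=R_p/f(p)$ directly via the quotient rule and characterize when the marginal contribution of asset $i$ is nonnegative; the degree-0 Euler observation that the financing of the extra weight is irrelevant to the sign is a nice touch absent from the paper. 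Your version is more elementary, makes the positivity assumptions $f(p)>0$, $f(i)>0$ explicit, and is arguably more faithful to the statement as written since it produces $PR(p)$ on the right-hand side; what it does not deliver is the global claim implicit in the paper's reading, namely that asset $i$ receives strictly positive weight in the fully re-optimized portfolio. Conversely, the paper's argument quietly treats $\theta_i$, $D_i$ and $PR(\widetilde{P})$ as constants independent of $\theta_n$, which is not literally true since they all depend on the full portfolio composition, so your first-order derivation is in some respects the more defensible of the two.
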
 

\begin{proof} See \ref{proof:GeneralCase}.
\end{proof} 

The real work at this stage is to compute the derivative function of the denominator with respect to its weight $ \frac{ \partial f}{\partial w_i}$. We provide results for various performance ratios in the table below\\

\begin{table}[H]
\centering
\begin{tabular}{|l|c|c|c|}
  \hline
  performance  	& definition 					&   performance    								&  diversification   \\
   ratio			& 	= $\frac{  R_p}{f(p)}$	  	& marginal sensitivity = $\frac{  \partial f}{\partial w_i}$ 	& factor= $  f(i) / \frac{  \partial f}{\partial w_i} $ \\
  \hline 
  Sharpe 			& $S_p = \frac{r_p - r_f}{\sigma_p}$ 			& $\rho_{i,p} \sigma_i $  		&  $1 / \rho_{i,p} $ \\
  Sortino	 		& $Sor_p = \frac{r_p - r_f}{TSD_p}$ 			& $\rho_{i,p} TSD_i $			&  $ 1 / \rho_{i,p}$ \\
  Information 		& $IR_p = \frac{r_p - r_b}{\sigma_{p-b}}$ 	& $\rho_{i,p-b} \sigma_i $  		& $ 1 / \rho_{i,p-b}   \times \sigma_{i-b} / {  \sigma_i }$ \\
  Treynor 		& $T_p = \frac{r_p - r_f}{\beta_p}$ 			& $\beta_i $  					&  $1$ \\
  Recovery		& $Rec_p = \frac{r_p - r_f}{MDD_p}$ 		& $\widetilde{MDD}_i $  		&  $ MDD_i / \widetilde{MDD}_i $ \\
  Calmar	 		& $Cal_p = \frac{r_p - r_f}{MDD^{36m}_p}$   & $\widetilde{MDD}^{36m}_i $  	& $ MDD^{36m}_i / \widetilde{MDD}^{36m}_i $ \\
  Sterling 			& $Ster_p = \frac{r_p - r_f}{ALD_p}$ 			& $\widetilde{ALD}_i $  			& $ {ALD_i} /  \widetilde{ALD}_i  $  \\
  \hline
\end{tabular}
\caption{We provide here above the results for the most common performance ratios}
\label{Tabular1}
\end{table}

In table \ref{Tabular1}, we have used the following notations:
\begin{itemize}
\item TSD stands for target semi deviation(standard deviation of return below target). 
\item $r_b$ is the benchmark return.
\item $\sigma_{p-b}$ is the standard deviation of the difference between the portfolio and benchmark returns. 
\item  $MDD$ (respectively $MDD^{36m}$, $ALD$ ) stands for the maximum drawdown, the maximum drawdown over 36 months and the annual average maximum drawdown over the entire historical period.
\end{itemize}

\begin{proof} See \ref{proof:TabularProof}.
\end{proof}

\section{Numerical application}
\subsection{Sharpe ratio}\label{Sharpe_example}

Let us apply the above formulas to a portfolio consisting of three assets with the characteristics described in table \ref{Tab:port_characteristics}. The portfolio weights are the optimal ones in terms of the highest Sharpe ratio with the constraints of weights between 0 to 100\% (no short selling allowed neither extra leverage). We also provide in the characteristics the correlation between the asset $i$ and the portfolio as this is useful for risk decomposition.

\begin{table}[H]
\centering
\begin{tabular}{|l|c|c|c|c|}
  \hline
     Asset      				& I          	& II         	& III        	& Total \\
  \hline
    Weight     				& 34.87\%    	& 28.07\%    & 37.06\%   	& 100.00\% \\
    Expected Return 		& 3.20\%     	& 3.50\%     & 4.50\%     	&  \\
    Volatility 				& 4.87\%     	& 5.63\%     & 5.12\%     	&  \\
    Correlation with portfolio 	& 49.87\%   	& 37.55\%   & 65.87\%    	&  \\
  \hline
\end{tabular}
\caption{Portfolio characteristics}
\label{Tab:port_characteristics}
\end{table}

Once the characteristics established, we can easily compute the portfolio performance ratio as provided in table \ref{Tab:port_computation1}. We compute the portfolio return as the convex combination of the assets returns as well as the portfolio volatility. For the latter, we use the volatility reconstruction formula \ref{prop:port_vol}. The portfolio Sharpe is then the fraction of the latter two. We can notice that the resulting portfolio Sharpe (1.4000) is substantially higher than the best asset Sharpe (0.8789). We are benefiting fully from the diversification effect.

\begin{table}[H]
\centering
\begin{tabular}{|l|c|}
  \hline
               			& Portfolio \\
  \hline
  Expected Return 		& 3.77\% \\
  Volatility 			& 2.69\% \\
  Sharpe Ratio 		& 1.4000  \\
  \hline
\end{tabular}
\caption{Portfolio resulting Sharpe ratio: all these numbers are computed from table \ref{Tab:port_characteristics} }
\label{Tab:port_computation1}
\end{table}

The table \ref{Tab:port_computation2} gives us a nice view of the Portfolio Sharpe decomposition. The asset III has the highest Sharpe ratio (0.8789) but the lowest Sharpe diversification. 
This results in particular in a the highest risk (46.46\%), which is a strong indication that asset III contributes more to the overall portfolio Sharpe ratio.
 Its risk weight (46.46\%) is higher than its portfolio weight (37.06\%) indicating that the Sharpe contribution will be over-weighted. 
 In contrast, risk weights for asset I and II (31.48\% and 22.06\%) are smaller than their corresponding portfolio weights ( 34.87\% and 28.07\%).
  They will contribute less and will be under-weighted in the overall portfolio Sharpe ratio. 
  Thanks to the strong asset III contribution and the diversification, as noted above, the overall portfolio achieved a significant increase in its Sharpe ratio (1.4000). 
  In table \ref{Tab:port_computation2}, the Sharpe ratio relative contribution is defined as the Sharpe ratio contribution divided by the portfolio Sharpe ratio. It sums to 100 \%.

\begin{table}[H]
\centering
\begin{tabular}{|l|c|c|c|c|}
  \hline
    Asset & I     & II    & III   & Total \\
  \hline
    Asset Sharpe Ratio 		&  0.6571  	& 0.6217		& 0.8789		&  \\
    Sharpe Diversification 	& 2.0054  	& 2.6632  	& 1.5182  	&  \\
    Component Sharpe Ratio 	& 1.3177  	& 1.6557  	& 1.3344  	&  \\
    Risk Weight 			& 31.48\% 	& 22.06\% 	& 46.46\% 	& 100.00\% \\
    Sharpe Ratio Contribution 	& 0.4148 	& 0.3652 	& 0.6200 	& 1.4000  \\
    Sharpe Ratio Relative Contribution
    						& 29.63\% 	& 26.09\%	& 44.29\% 	& 100.00 \% \\
  \hline
\end{tabular}
\caption{Portfolio Sharpe decomposition: all these numbers are computed from table \ref{Tab:port_characteristics} }
\label{Tab:port_computation2}
\end{table}

\subsection{Recovery ratio}
Recovery ratio is an important performance ratio in the funds' world as it provides the expected return divided by the maximum drawdown. Maximum drawdown is closely monitored by professional investors as it gives an hint about the maximum potential loss should they invest and dis-invest at the worst time. For the sake of comparison with the previous study in section \ref{Sharpe_example}, we will first start with the same portfolio with the same percentage weights. For each asset, we provide in table \ref{Tab:recov_port_characteristics1} its maximum drawdown as well its performance marginal sensitivity (whose formula is $\frac{  \partial f}{\partial w_i}$) as provided in \ref{Tabular1}. 
\begin{table}[H]
\centering
\begin{tabular}{|l|c|c|c|c|}
  \hline
     Asset      				& I  			& II  		& III        	& Total \\
  \hline
    Weight     				& 34.87\%   	& 28.07\%  	& 37.06\%   	& 100.00\% \\
    Asset MDD 			& 5.71\% 	& 6.34\% 	& 4.53\% & \\
    Performance marginal sensitivity
    					 	& 3.14\% 	& 1.90\% 	& 4.07\%  & \\
  \hline
\end{tabular}
\caption{Portfolio characteristics for recovery ratio}
\label{Tab:recov_port_characteristics1}
\end{table}

Like for the Sharpe ratio, we can compute the portfolio resulting characteristics in table \ref{Tab:recov_port_computation1}. Expected return is like before computed as the convex combination of the asset returns (and is the same as in table \ref{Tab:port_computation1}). The recovery ratio is then simply the fraction of the latter two.

\begin{table}[H]
\centering
\begin{tabular}{|l|c|}
  \hline
               			& Portfolio \\
  \hline
  Expected Return 			& 3.77\% \\
    Portfolio Drawdown 		& 3.14\% \\
    Portfolio recovery ratio 	& 1.1999  \\
  \hline
\end{tabular}
\caption{Portfolio resulting recovery ratio}
\label{Tab:recov_port_computation1}
\end{table}

More interestingly is to analyze portfolio recovery decomposition as provided in table \ref{Tab:recov_port_computation1}. Again, thanks to portfolio diversification, we achieve a higher performance ratio (recovery of 1.1999) compared to the highest asset  performance ratio (obtained for asset III 0.9939). Like for the Sharpe ratio, the risk weight of asset III (48.12\%) is over-weighted compared to its portfolio weight (37.06\%). The opposite situation arises for asset II (risk weight of 17.02\% compared to a 28.07\%). By complete chance, risk and portfolio weight for asset I are equal up to the fourth decimal (risk weight of 34.865\% compared to portfolio weight of  34.872\% ). As in the case of the Sharpe ratio, we can check that the sum of the risk weights are equal to 100\%. 
In table \ref{Tab:recov_port_port_computation2}, the recovery ratio relative contribution is defined as the recovery ratio contribution divided by the portfolio recovery ratio. It sums to 100 \%.

\begin{table}[H]
\centering
\begin{tabular}{|l|c|c|c|c|}
  \hline
    Asset & I     & II    & III   & Total \\
  \hline
    Asset Recovery ratio 		&  0.5609  		& 0.5518  	& 0.9939  	&  \\
    Recovery Diversification 		&  1.8182  		& 3.3333  	& 1.1111  	&  \\
    Component Recovery 		&  1.0198  		& 1.8393  	& 1.1043  	&  \\
    Risk Weight 				&  34.87\% 		& 17.02\% 	& 48.12\% 	& 100.00\% \\
    Recovery Contribution 		&  0.3556		& 0.3130   	& 0.5314		& 1.1999  \\
    Recovery Relative Contribution &  29.63\%		& 26.08\%	& 44.28\%	& 100.00\% \\
  \hline
\end{tabular}
\caption{Portfolio recovery decomposition}
\label{Tab:recov_port_port_computation2}
\end{table}


A natural question that arises when looking at the recovery ratio for the portfolio is to determine if the optimal weights for the Sharpe ratio are also optimal for the recovery ratio. The answer is no in general. Recovery ratio is substantially different from Sharpe ratio. Hence the optimal portfolio for the recovery ratio has no reason to have the same weights as for the optimal portfolio for the Sharpe ratio. Because the recovery ratio implies a non convex function, namely the maximum drawdown, there is no closed form solution for the optimal portfolio as opposed to the Sharpe ratio settings. Using the CRG (that stands for Generalized Reduced Gradient) method  (as presented in \cite{Lasdon_1974}), we can determine the optimal weights for this portfolio as provided in table \ref{Tab:recov_port_characteristics2}. For this new portfolio, Asset performance marginal sensitivity changes slightly as the portfolio drawdown times are different.

\begin{table}[H]
\centering
\begin{tabular}{|l|c|c|c|c|}
  \hline
     Asset      				& I         	& II   		& III        	& Total \\
  \hline
   Weight 				& 5.99\% 	& 24.78\% 	& 69.24\% 	& \\
   Asset MDD 				& 5.71\% 	& 6.34\% 	& 4.53\% 	& \\
   Asset performance marginal sensitivity
						& 3.74\% 	& 1.95\% 	& 3.49\% 	& \\  
  \hline
\end{tabular}
\caption{Optimal Portfolio characteristics for maximum drawdown}
\label{Tab:recov_port_characteristics2}
\end{table}

We can recompute the new portfolio characteristics as provided in table \ref{Tab:recov_port_computation2}. We achieve a substantially higher portfolio recovery 
ratio (1.3379 versus  1.1999). This is due both to a higher expected return (4.17\%  versus 3.77\%) and a lower portfolio maximum drawdown  (3.12\% versus  3.14\%).

\begin{table}[H]
\centering
\begin{tabular}{|l|c|}
  \hline
							& Portfolio \\
  \hline
    Expected Return 			& 4.17\% \\
    Portfolio Drawdown 			& 3.12\% \\
    Portfolio recovery ratio 		& 1.3379 \\
  \hline
\end{tabular}
\caption{Optimal Portfolio resulting recovery ratio}
\label{Tab:recov_port_computation2}
\end{table}

As for previous studies, we can look at maximum drawdown decomposition as provided in table \ref{Tab:recov_port_port_computation2}. Compared to the previous portfolio with same weights as the optimal ones for the Sharpe ratio, the risk weight for asset III increases even more (77.38\%  versus  48.12\%). This is quite logical as this optimal portfolio for the maximum drawdown is indeed very much geared towards asset III (asset weight of 69.24\% versus 37.06\%). Interestingly, thanks to diversification, the recovery contribution for asset III (0.9986) is even higher to the asset recovery ratio (0.9939 ).

\begin{table}[H]
\centering
\begin{tabular}{|l|c|c|c|c|}
  \hline
    Asset 						& I     		& II    		& III   		& Total \\
  \hline
    Asset Recovery ratio 		& 0.5609  	& 0.5518  	& 0.9939  	&  \\
    Recovery Diversification 		& 1.5268  	& 3.2606  	& 1.2984  	&  \\
    Component Recovery 		& 0.8563  	& 1.7991  	& 1.2905  	&  \\
    Risk Weight 				& 7.17\% 	& 15.45\% 	& 77.38\% 	& 100.00\% \\
    Recovery Contribution 		& 0.0614		& 0.2779  	& 0.9986  	& 1.3379  \\
  \hline
\end{tabular}
\caption{Optimal Portfolio maximum drawdown decomposition}
\label{Tab:recov_port_port_computation2}
\end{table}

\section{Concluding Remarks}
We have introduced in this paper a unified framework for deriving asset contribution for performance ratios that are homogeneous function.
This allows us finding easily previous results on incremental Sharpe ratio contribution of a new asset as well as extend this to new performance 
ratios like Sortino, Information, Treynor, Recovery, Calmar or Sterling ratios where this did not exist. 
We also compare the impact of a new asset to a portfolio performance thanks to these incremental performance marginal sensitivity and show a methodology to analyse asset contribution to a portfolio. In a companion paper \cite{Benhamou_ConnectingSharpe}, we will analyze the statistical properties of the Sharpe
\clearpage

\appendix

\section{Various Proofs}
\subsubsection{Proof of Proposition \ref{prop:port_vol}}\label{proof:0}
Denoting by $\rho_{i,j}$ the correlation between asset $i$ and$j$, we can decompose the porfolio variance as a combination of assets' volatility as follows:
\begin{eqnarray} \label{portfolio_variance}
 \sigma_p^2 = \sum_{i=1...n} w_i^2 \sigma_i^2 + 2 \sum_{i,j=1...n, i \neq j} w_i w_j \rho_{i,j} \sigma_i \sigma_j
\end{eqnarray}

Differentiating the above equation \ref{portfolio_variance} with respect to $w_i$, we have
\begin{eqnarray} \label{diff_portfolio_variance}
2  \sigma_p \frac{ \partial \sigma_p} { \partial w_i}   =  2  w_i \sigma_i^2 + 2 \sum_{j=1...n, j \neq i} w_j \rho_{i,j} \sigma_i \sigma_j 
\end{eqnarray}

We can notice that the correlation between asset $i$ and the portfolio $p$  is given by
\begin{eqnarray} \label{diff_portfolio_variance}
\rho_{i,p} = \frac{     w_i \sigma_i^2 + \sum_{j=1...n, j \neq i} w_j \rho_{i,j} \sigma_i \sigma_j  } { \sigma_i \sigma_p} 
\end{eqnarray}

which shows that 
\begin{eqnarray} \label{diff_portfolio_variance}
\frac{ \partial \sigma_p} { \partial w_i}  = \rho_{i,p} \sigma_i
\end{eqnarray}

Since the portfolio volatility is homogeneous of degree 1, the Euler's homogeneous function theorem states that 

\begin{eqnarray}  
\sigma_p   =  \sum_{i=1...n} w_i \frac{ \partial \sigma_p} { w_i} = \sum_{i=1...n} w_i  \rho_{i,p} \sigma_i  
\end{eqnarray}
\qed

\subsubsection{Proof of Proposition \ref{prop:1}}\label{proof:1}
Dividing and multiplying by $\rho_{i,p} \sigma_i$ in the formula of the portfolio Sharpe ratio and regrouping the terms leads to the final results as follows:

\begin{eqnarray}
S_p =  \sum_{i=1}^{n}  w_i   \frac{r_i}{\sigma_p} =  \sum_{i=1}^{n}  \frac{w_i \rho_{i,p} \sigma_i }{\sigma_p}  \frac{1}{\rho_{i,p}}   \frac{r_i}{\sigma_i} =  \sum_{i=1}^{n}  \theta_i  \frac{1}{\rho_{i,p}} S_i
\end{eqnarray}
\qed

\subsubsection{Proof of Proposition \ref{prop:2}}\label{proof:2}

Let us denote by $P(w_i, i=1..n)$ the portfolio composed of $n$ assets with percentage weights $w_i$ and $n$ the new asset. The portfolio percentage weights sum to 1: $\sum_{i=1..n} w_i = 1$. The optimization program writes as follows:
\begin{eqnarray}
\text{maximize} & &  \text{Sharpe Ratio}( P(w_i, i=1..n) ),   \\
\text{subject to} & & \sum_{i=1..n} w_i = 1
\end{eqnarray}

Using proposition \ref{prop:1} and multiplying and dividing by $1- \theta_n$ (with the additional constraint that $\theta_n \neq 1$ \footnote{the particular case of $\theta_n =1$ can be handled easily by taking the left limit for $ \theta_n  \uparrow 1$}), the optimal solution is also the solution of this program

\begin{eqnarray}
\text{maximize} & & (1- \theta_n) \sum_{i=1..n-1}  \frac{\theta_i}{1- \theta_n}  \frac{1}{\rho_{i,P}} S_i +  \theta_n  \frac{1}{\rho_{n,P}} S_n \\
\text{subject to} & & \sum_{i=1..n} \theta_i = 1, \quad \theta_n \neq 1
\end{eqnarray}

Fixing $\theta_n$ and noticing that the weights $ \frac{\theta_i}{1- \theta_n}$ for $i=1..n-1$ sum to 1, the optimization program is indeed a two steps program where we can optimize first in terms of the weights $ \frac{\theta_i}{1- \theta_n}$ and then in terms of $ \theta_n$. As the $n-1$ terms are indeed the percentage weights of an $n-1$ portfolio composed of $n-1$ assets, the first optimization is exactly the same as the optimization of the optimal portfolio with $n-1$ assets in terms of its Sharpe ratio. The first step therefore leads to the optimal portfolio without asset $n$ for the Sharpe ratio. We will denote this portfolio by $\widetilde{P}$. 
The maximization program is then equivalent to 
\begin{eqnarray}
\text{maximize} & & (1- \theta_n) S_{\widetilde{P}} +  \theta_n  \frac{1}{\rho_{n,P}} S_n \\
\text{subject to} & & 0 \leq \theta_n \leq  1
\end{eqnarray}

This optimization program is a linear function whose optimal solution $\theta_n$ is not equal to zero if and only if the slope coefficient is positive (which proves the result):

\begin{eqnarray}
 \frac{1}{\rho_{n,P}} S_n  -  S_{\widetilde{P}} \geq  0  \Leftrightarrow  S_n \geq   \rho_{n,P} S_{\widetilde{P}}  
\end{eqnarray}

\qed

\subsubsection{Proof of Proposition \ref{prop:GeneralIncremental}}\label{proof:GeneralCase}
The proof is exactly the same as in \ref{proof:2} and leads at the end to solve the following linear maximization program where like in \ref{proof:2} , we denote by $\widetilde{P}$ the optimal portfolio with $n-1$ assets in terms of the performance ratio.

\begin{eqnarray}
\text{maximize} & & (1- \theta_n) PR(\widetilde{P}) +  \theta_n  D_n PR(n) \\
\text{subject to} & & 0 \leq \theta_n \leq  1
\end{eqnarray}

This optimization program is a linear function whose optimal solution $\theta_n$ is not equal to zero if and only if the slope coefficient is positive (which proves the result):

\begin{eqnarray}
 D_n PR(n)  - PR(\widetilde{P})  \geq  0  \Leftrightarrow & PR(n)  \geq  \frac{PR(\widetilde{P})}{D_n}
\end{eqnarray}

\qed

\subsubsection{Proof of Table \ref{Tabular1} results}\label{proof:TabularProof}
Results for Sharpe are already proved in \ref{proof:0}. \par
Denoting by $r_M$ the return of the market asset, the beta in the Treynor ratio is given by
\begin{eqnarray}
\beta_p = \frac{Cov(r_p,r_M)}{\sigma_M^2} = \frac{Cov( \sum_{i=1}^n w_i r_i, r_M)}{\sigma_M^2} = \sum_{i=1}^n  \frac{Cov(r_i, r_M)}{\sigma_M^2}
\end{eqnarray}

A straight derivation leads to the results $ \frac{  \partial f}{\partial w_i} = \beta_i $\par
The target semi deviation is quite similar to the standard deviation with the additional constraint that we only use returns that are below its mean. The proof is therefore similar to the one of the portfolio in the Sharpe ratio with the additional constraint to use only down returns, leading to a target standard deviation for the assets' performance marginal sensitivity: 
\begin{eqnarray}
\frac{  \partial f}{\partial w_i} = \rho_{i,p} TSD_i
\end{eqnarray}\par

The proof for the recovery, Calmar and Sterling ratio are similar and we will detail only the first one as the other derivation are just an averaging or windowing of the first proof.
The maximum drawdown measures the largest peak-to-trough decline in the value of a portfolio. Denoting by $0$ to $T$ the historical times at which we observe the portfolio return and by $cr_p^j$ the cumulative return of the portfolio from time $0$ to $j$ with the convention that the return at time $0$ is null, the maximum drawdown can be written mathematically as the maximum of the following discrete optimization program

\begin{eqnarray}
\text{MDD} \equiv \min_{j=0..T,  \>\> k =j..T} \frac {1+ cr_p^k } {1+ cr_p^j }-1
\end{eqnarray}\par

The portfolio contains $n$ assets with percentage weights $w_i$. Assuming no re-balancing, the portfolio cumulative returns writes as the convex combination of the cumulative asset $i$ return, leading to the following definition of the maximum drawdown 
\begin{eqnarray}
\text{MDD} = \min_{j=0..T,  \>\> k =j..T} \frac {\sum_{i=1}^n w_i (1+  cr_i^k) } {\sum_{i=1}^n w_i (1+  cr_i^j) }-1
\end{eqnarray}\par

As this is a discrete optimization program, the optimum is attained for $j_*$ and $k_*$. Deriving the maximum drawdown leads therefore to

\begin{eqnarray}
\frac{\partial} {\partial w_i} \text{MDD} = \left(  \frac{1+  cr_i^{k_*}} {1+ cr_p^{j_*}}-1 \right)  -\left(   \frac{ \frac{1+ cr_i^{j_*}}   {1+ cr_p^{j_*}} \times (1+ cr_p^{k_*} )}  {1+ cr_p^{j_*}} -1\right)  \equiv \widetilde{MDD}_i
\end{eqnarray}\par

Hence the sensitivity of the maximum drawdown is given by the difference of 
\begin{itemize}
\item the drawdown between between the portfolio cumulative return at time $j_*$ and the asset at time $k_*$
\item the drawdown between between the portfolio cumulative return at time $j_*$ and the portfolio return at time $k_*$ augmented by the difference of cumulative return between the asset $i$ and portfolio at time $j_*$
\end{itemize}

For the  Calmar and Sterling ratio, similar formulas exist where the sensitivity of the maximum drawdown is taken over 36 months, respectively as the annual average.
\qed

\subsubsection{Correlation matrix for the three assets}\label{Correl_mat}
For the sake of completeness, we provide below the correlation matrix. This matrix is consistent with asset correlation with portfolio coefficients.
\begin{table}[H]
\centering
    \begin{tabular}{|l|c c c |}
    \hline
     Asset    &  I        &  II       &  III  \\
    \hline
     I        &                                              1.00  & -                 0.20  &                   0.40  \\
     II       & -                                            0.20  &                   1.00  &                   0.30  \\
     III      &                                              0.40  &                   0.30  &                   1.00  \\
    \hline
    \end{tabular}
  \caption{Asset correlation matrix}
  \label{tab:correl_mat}
\end{table}

From this correlation matrix denoted by $\Sigma$ and for asset $i$ with the corresponding Kronecker delta vector defined by $\delta_i = (0 ... 1 ... 0 )^T$ with one at the ith row and zero elsewhere, it is then straightforward to compute for any asset its correlation with portfolio (whose weight vector is defined as $W = (w_1, .. w_j, .., w_n)^T$ as follows:

\begin{eqnarray}
\rho_{i,p} = \frac{ \delta_{i} \Sigma W }{ \sqrt{\delta_{i} \Sigma \delta_{i}}  \sqrt{W  \Sigma W} }
\end{eqnarray}
\clearpage


\bibliographystyle{jfe}
\bibliography{mybib}

\end{document}